\newcounter{RomanNumber}
\begin{document}

\title{A Filter of Minhash for Image Similarity Measures
}
\subtitle{}


\author{Jun Long $^\dagger$$^{\natural}$        \and
        Qunfeng Liu $^\dagger$$^{\natural}$        \and
        Xinpan Yuan $^\ddagger$$^{\natural}$  \and
        Chengyuan Zhang $^\dagger$$^{\natural}$     \and
        Junfeng Liu  $^\dagger$$^{\natural}$     \and
}


\institute{Jun Long \at
              \email{jlong@csu.edu.cn}           
            \and
            Qunfeng Liu\at
              \email{qunfengliu@csu.edu.cn}
             \and
            \Letter XinPan Yuan \at
              \email{xpyuan@hut.edu.cn}
              \and
            Chengyuan Zhang \at
              \email{cyzhang@csu.edu.cn}           
              \and
              Junfeng Liu\at
              \email{junfengliu@csu.edu.cn}
              \and
           $^\dagger$ School of Information Science and Engeering, Central South University, PR China\\
           $^{\natural}$ Big Data and Knowledge Engineering Institute, Central South University, PR China\\
$^{\ddagger}$ School of Computer, Hunan University of Technology, China
}

\date{Received: date / Accepted: date}

\maketitle

\begin{abstract}
Image similarity measures play an important role in nearest neighbor search and duplicate detection for large-scale image datasets. Recently, Minwise Hashing (or Minhash) and its related hashing algorithms have achieved great performances in large-scale image retrieval systems. However, there are a large number of comparisons for image pairs in these applications, which may spend a lot of computation time and affect the performance. In order to quickly obtain the pairwise images that theirs similarities are higher than the specific threshold T (e.g., 0.5), we propose a dynamic threshold filter of Minwise Hashing for image similarity measures. It greatly reduces the calculation time by terminating the unnecessary comparisons in advance. We also find that the filter can be extended to other hashing algorithms, on when the estimator satisfies the binomial distribution, such as b-Bit Minwise Hashing, One Permutation Hashing, etc. In this pager, we use the Bag-of-Visual-Words (BoVW) model based on the Scale Invariant Feature Transform (SIFT) to represent the image features. We have proved that the filter is correct and effective through the experiment on real image datasets.
\keywords{Image similarity measures, BoVW, SIFT, Minwsie Hashing, Dynamic threshold filter}
\end{abstract}

\section{Introduction}
\label{intro}

In recent years, with the rapid development of the Mobile Internet and social multimedia, a large number of images and videos are generated in the Internet every day. In the Mobile Internet era, people can take all kinds of pictures at any time or in any place and share them with their friends on the Internet, which results in the explosive growth of digital pictures. At present, hundreds and millions of pictures are uploaded to social media platforms every day, such as Facebook, Twitter, Flickr and so on. How to quickly search the similar images becomes a hot topic for the multimedia researchers, and the related research areas are also concerned ~\cite{DBLP:conf/cvpr/JegouDSP10}.

Image similarity measures aim to estimate whether a given pair of images is similar or not. It plays an important role in nearest neighbor search and near-duplicate detection for large-scale image resources. Recently, the Bag-of-Visual-Words (BoVW) model ~\cite{DBLP:journals/tmm/ZhengWLT15,DBLP:conf/cvpr/ZhengWZT14}, with local features, such as SIFT ~\cite{DBLP:journals/ijcv/Lowe04}, has been proven to be the most successful and popular local image descriptors. In the BoVW model, a “bag” of visual words is used to represent each image. The visual words are usually generated by clustering the extracted SIFT features. The SIFT descriptor is widely used in image matching ~\cite{DBLP:journals/corr/abs-1710-02726,DBLP:conf/mm/WangLWZ15,DBLP:journals/tip/WangLWZ17} and image search ~\cite{DBLP:journals/tomccap/ZhouLLT13,DBLP:journals/tip/LiuLZZT14a,DBLP:journals/tip/WangLWZZH15}.

At the beginning, the Minwise Hashing was mainly designed for measuring the set similarity. The algorithm is widely used for near-duplicate web page detection and clustering ~\cite{DBLP:journals/cn/BroderGMZ97,DBLP:conf/sigir/Henzinger06,DBLP:journals/tip/WangLWZZH15}, set similarity measures ~\cite{DBLP:conf/www/BayardoMS07}, nearest neighbor search ~\cite{DBLP:conf/stoc/IndykM98}, large-scale learning ~\cite{DBLP:conf/nips/LiSMK11,DBLP:conf/cikm/WangLZ13}, etc. In recent years, Minwise Hashing has been applied to the computer vision applications. Weighted min-Hash method has been proposed to find the near duplicated images. Grauman ~\cite{DBLP:conf/cvpr/JainKG08} combined the distance metric learning with the min-Hash algorithm to improve the image retrieval performance. A new method of highly efficient min-Hash generation for image collections is proposed by Chum and Matas ~\cite{DBLP:conf/cvpr/ChumM12}. Zhao developed an efficient matching technique for linking large image collections namely Sim-Min-Hash ~\cite{DBLP:conf/mm/ZhaoJG13}.Qu ~\cite{DBLP:conf/icimcs/QuSYL13} proposed a spatial min-Hash algorithm for similar image searching. In addition, some multimedia researchers proposed learning hash function for image similarity search ~\cite{DBLP:conf/sigir/WangLWZZ15,DBLP:journals/ivc/WuW17,DBLP:journals/cviu/WuWGHL18,DBLP:conf/pakdd/WangLZW14}. All these methods have achieved great performances in image similarity measures and image searching. However, in many image retrieval or search systems, there are huge amounts of comparisons for image pairs, which may spend a lot of computation time and have a negative impact on the performance of the systems.

Inspired by the successes of Minwise Hashing in image similarity measures, our main contributions are as follows:

\begin{itemize}
\item  We propose a dynamic threshold filter of Minwise Hashing for image similarity measures. The filter divides the whole fingerprint comparison process into a series of comparison points and sets the corresponding thresholds. At the k-th comparison point, the method will filter out the pairwise images whose similarities are less than the lower threshold TL(k) in advance. Meanwhile, the algorithm will output the pairwise images when theirs similarities are higher than the upper threshold TU(k). It greatly reduces the calculation time by terminating the unnecessary comparisons in advance.
\item  We find that the filter can be extended to other hashing algorithms for image similarity measures, as long as the estimator satisfies the binomial distribution, such as b-Bit Minwise Hashing, One Permutation Hashing, etc.
\end{itemize}

\noindent\textbf{Roadmap.}
The rest of the paper is organized as follows: Section 2 discusses the related works. Section 3 describes the dynamic threshold filter in detail. In Section 4, the filter is experimentally verified on real image databases. Section 5 gives conclusions.

\subsection{Image Representation and Similarity Measures}
\label{relwork}

This section reviews the BoVW model based on the SIFT to represent the image features, as well as the Minwise Hashing for measuring the set similarity.
In this paper, the relevant notations are shown in Table~\ref{tab:nnd notation}.

\begin{table}
	\centering
    \small
    \caption{Notations} \label{tab:nnd notation}
	\begin{tabular}{|p{0.27\columnwidth}| p{0.62\columnwidth} |}
		\hline
		~R & resemblance \\ \hline\hline
		~S & set                                \\ \hline
        ~$J(S_1,S_2)$   & the original Jaccard similarity of $S_1$ and $S_2$                                \\ \hline
	  	~K   & the k-th comparison point                                \\ \hline
	  	~$\pi$   & a random permutation function                                \\ \hline
	  	~$\Omega$   & represent the whole elements in the process of random permutation  \\ \hline
	  	~$\pi(S)$   & the hash values of the set S when given a hash function $\pi(.)$ \\ \hline
	  	~$min(\pi(S))$   & the minimum hash value in $\pi(S)$                              \\ \hline
	  	~Pr   & probability                                \\ \hline
	  	$R_M$   & the estimator of Minwise Hashing                               \\ \hline
	  	$R_{M}(k)$  &  the estimated similarity of Minwise Hashing at the $k$-th comparison point                               \\ \hline
	  	T   &  the number of times that the fingerprints are equal     \\ \hline
        $T$   &  a specified threshold    \\ \hline
        $T_{L}(k)$ &  the lower threshold at the $k$-th comparison point      \\ \hline
        $T_{U}(k)$   & the upper threshold at the $k$-th comparison point                                \\ \hline
        X  & the number of times that the fingerprints are equal                                \\ \hline
         F(x)& the distribution function of X                                \\ \hline
       e & a small probability                               \\ \hline
       m & a solution of the equation                               \\ \hline
              H & the hypothesis testing                              \\ \hline
	\end{tabular}
    \vspace{-4mm}
\end{table}

\subsubsection{Image Representation}
Scale Invariant Feature Transform (SIFT) ~\cite{DBLP:journals/ijcv/Lowe04} is a computer vision algorithm, which detects and describes local features in images. The SIFT feature descriptor is based on the appearance of the object at particular points. Besides, the descriptor is invariant to uniform scaling, orientation, illumination changes, and partially invariant to affine distortion. In the Bag-of-Visual-Words model, a set of visual words V ~\cite{DBLP:conf/iccv/SivicZ03,KAISWang17,NeuroWang13,DBLP:journals/corr/abs-1708-02288,NNLS2018} is constructed by the SIFT descriptor. The method builds the vocabulary through the K-means clustering algorithm from the training image datasets ~\cite{DBLP:conf/cvpr/NisterS06,DBLP:conf/cvpr/PhilbinCISZ07,DBLP:conf/ijcai/WangZWLFP16,LinMMM13,WangMM13,Wangarxiv2018,DBLP:journals/tnn/WangZWLZ17}. For each image, the SIFT features are assigned to the nearest cluster center and give the visual word representation.

For a vocabulary V, each visual word is encoded with the unique identifier from {1, … ,$\mid$V$\mid$}, where the variable $\mid$V$\mid$ is defines as the size of the vocabulary V. A set $S_i$ of words $S_i \subseteq $ V is a local representation, which does not store the number of features but only focusing on whether they present or not.

Therefore, each image can be represented by a visual word set S. The similarity of the pairwise images is equivalent to measure the similarity of visual sets ~\cite{DBLP:conf/bmvc/ChumPZ08,DBLP:conf/mm/WangLWZZ14}. We assume that $S_1$ and $S_2$ are the visual sets from a pair of images. So the similarity between two images can be defined as the Jaccard coefficient:
\begin{equation}
\mathcal{R}= sim(S_1, S_2) = \dfrac{S_1 \cap S_2}{S_1 \cup S_2}=Jaccard(S_1, S_2)
\end{equation}

\subsubsection{Minwise Hashing}
Minwise Hashing (or Minhash) is a Locality Sensitive Hashing, and is considered to be the most popular similarity estimation methods. It keeps a sketch of the data and provides an unbiased estimate of pairwise Jaccard similarity. In 1997, Andrei Broder and his colleagues invented the Minwise Hashing algorithm for near-duplicate web page detection and clustering ~\cite{DBLP:journals/cn/BroderGMZ97}. Recently, the algorithm is widely used in many applications including duplicate detection ~\cite{DBLP:conf/sigir/Henzinger06}, all-pairs similarity ~\cite{DBLP:conf/www/BayardoMS07}, nearest neighbor search ~\cite{DBLP:conf/stoc/IndykM98}, large-scale learning ~\cite{DBLP:conf/nips/LiSMK11} and computer visions ~\cite{DBLP:journals/pr/WuWGL18,TC2018,DBLP:journals/pr/WuWLG18}.
According to the process of Minwise Hashing, the algorithm requires K (commonly, K=1000) independent random permutations to deal with the datasets.It denotes π as a random permutation function: $\pi$: $\Omega \leftarrow\Omega$ and $min(\pi(S))$=$min_{i\in S}(\pi(i))$ . The similarity between two non-empty sets $S_1$ and $S_2$ is defined as:
\begin{equation}
Pr(min(\psi(\mathcal{S}_1))= min(\psi(\mathcal{S}_2))) = \dfrac{\mathcal{S}_1 \cap \mathcal{S}_2}{\mathcal{S}_1 \cup \mathcal{S}_2}=Jaccard(\mathcal{S}_1,\mathcal{S}_2)
\end{equation}
It generates K random permutations $\pi_1$, $\pi_2$, $\pi_3$, … , $\pi_K$ independently, and the estimator of Minwise Hashing is:
\begin{equation}
R_M(S_1,S_2)=\dfrac{1}{K}\sum_{i=0}^{K}1\{min(\pi_{i}(S_i)=min(\pi_{i}(S_2)))\}
\end{equation}
$R_M$ is an unbiased estimator of $J(S_1, S_2)$, with the variance:
\begin{equation}
Var(R_M)=\dfrac{1}{K}J(S_1, S_2)(1-J(S_1, S_2))
\end{equation}

\subsection{Dynamic Threshold Filter}
\label{filter}

\subsubsection{Problem Description}
In order to estimate the similarity of sets, the hashing algorithms generate K fingerprints (or hash values) by K random permutations, and obtain the unbiased similarity R after the fingerprint comparisons. For example, there are 1 million set pairs and K=1000, we need 1 billion comparisons. These large-scale comparisons spend a lot of computation time and storage space.

\begin{figure}[thb]
\newskip\subfigtoppskip \subfigtopskip = -0.1cm
\centering
\includegraphics[width=.80\linewidth]{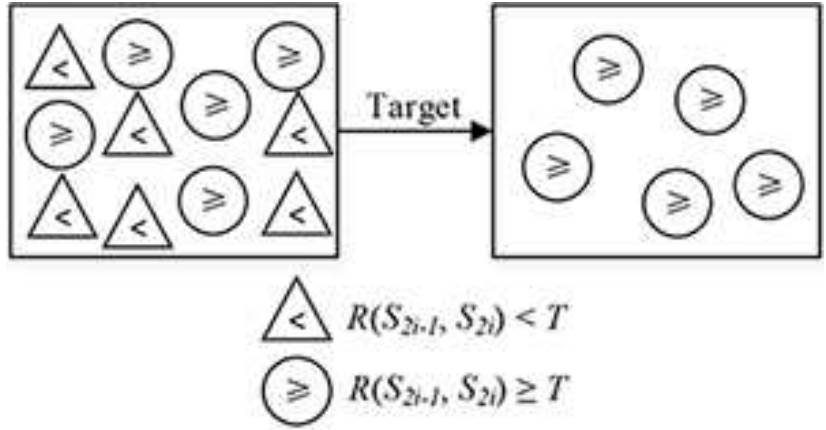}
\vspace{-1mm}
\caption{\small Filter Example }
\label{fig:1}
\end{figure}

According to the clustering algorithm for web pages [10], we also cluster by the visual words in large-scale image datasets and generate a series of image pairs, named $(Image_1, Image_2), (Image_3, Image_4), \dots, (Image_{2n-1}, Image_{2n})$. The corresponding set pairs are defined as $(S_1, S_2), (S_3, S_4), ... , (S_{2n-1}, S_{2n})$. We set a specified threshold T (e.g., 0.5). The target is to find that the set pairs whose estimated similarities are greater than the threshold T after comparisons: $\{(S_{2i-1}, S_{2i}) \mid R(S_{2i-1}, S_{2i}) \geq T, 1\leq i\leq n \}$, as shown in Fig.~\ref{fig:1}.

However, we only care about the pairwise datasets whose similarities are larger than a specified threshold T (e.g., 0.5) in some multimedia applications, e.g. near-duplicate detection, clustering, nearest neighbor search, etc.

The strategy of the filter is: given a small probability e and a specified threshold T at the k-th ($0<k\leq K$) comparison, the set pairs whose similarities are smaller than the threshold $T_{L}(k)$ will be filtered out. Similarly, the set pairs whose similarities are higher than the threshold $T_{U}(k))$ will be output.

According to the above viewpoints, we could set a filter to output or filter out some set pairs in advance, during the process of comparisons. The strategy of the filter is shown in Fig.~\ref{fig:2}. The pre-filtering method greatly reduces the number of comparisons and computation time.

\begin{figure}[thb]
\newskip\subfigtoppskip \subfigtopskip = -0.1cm
\centering
\includegraphics[width=.80\linewidth]{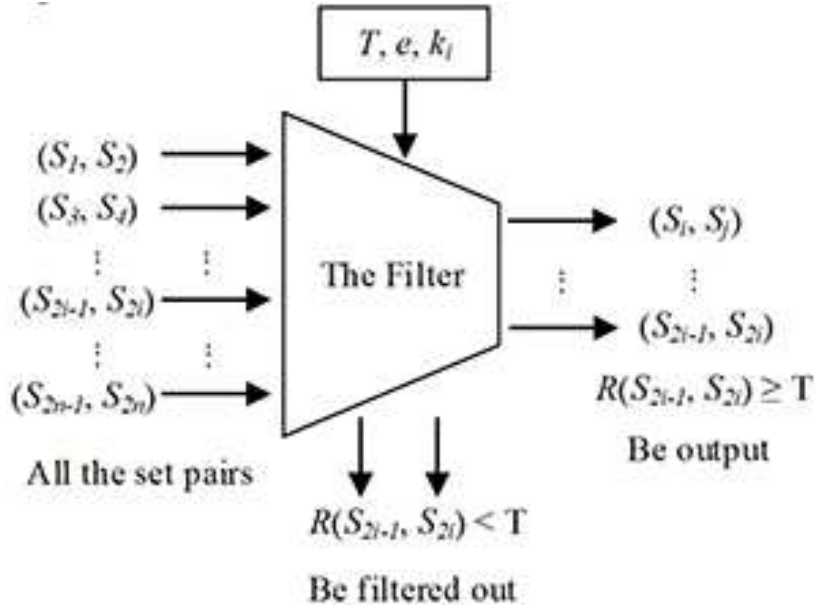}
\vspace{-1mm}
\caption{\small Filter Strategy }
\label{fig:2}
\end{figure}

\subsubsection{Threshold Construction}
When the estimator satisfies the binomial distribution, we can build a dynamic threshold filter through the hypothesis testing and the small probability event. It greatly reduces the calculation time by terminating the unnecessary comparisons in advance. In this pager, we take Minwise Hashing algorithm as an example and build the filter for image similarity measures.

As we know, each image can be represented by a set $S_i$ of visual words through the BoVW model. Then, we could use Minwise Hashing algorithm to deal with the set pair ($S_{2i-1}, S_{2i}$) and obtain the corresponding fingerprints. The relevant definitions are following:

The random variable X is the number of times that the fingerprints are equal, that is:
\begin{equation}
X=\sum_{i=0}^{K}1\{min(\pi_{j}(S_{2i-1})=min(\pi_{j}(S_{2i})))\}
\end{equation}

Obviously, the random variable X satisfies the binomial distribution, denotes as $X \sim B(n,R_M)$. Thus, The distribution function $\mathcal{F}(x)$ of the variable $X$ is denoted as follow:
\begin{equation}
\mathcal{F}(x)=\begin{cases}
            \sum_{i=0}^{m}\tbinom{n}{i}{R_M}^{i}(1-R_M)^{n-i} \ \ \ \ ,  x \leq m\\
            \sum_{i=m+1}^{K}\tbinom{n}{i}{R_M}^{i}(1-R_M)^{n-i}, x > m
           \end{cases}
\end{equation}
where the variable m is in the interval (0, k].

$R_M$ is the estimator of Minwise Hashing after all the K comparisons, there is:

\begin{equation}
R\approx R_M=\dfrac{X}{K}
\end{equation}

The variable $R_{M}(k)$ is defined as the estimated similarity of the Minwise Hashing ,at the k-th ($0<k\leq K$) comparison:
\begin{equation}
R_{M}(k)=\dfrac{X}{k}
\end{equation}

\textbf{The Lower Threshold}
\begin{lemma} \label{lemma:LowThre}
Given a threshold T and a small probability e at the k-th ($0<k\leq K$) comparison, we can obtain the solution $m=m_l$ from the following equation
\begin{equation*}
\label{eqn:9}
\sum_{i=0}^{m}\tbinom{k}{i}{T}^{i}(1-{T})^{k-i}=e, 0<k\leq K
\end{equation*}
Then, we could set the lower threshold $T_{L}(k) = \dfrac{m_l}{k}$. It's obvious that $R_M <T$, when $R_{M}(k)≤T_{L}(k)$.
\end{lemma}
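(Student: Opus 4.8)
The plan is to read the statement as a one-sided hypothesis test on the unknown true resemblance $R$, which the paper denotes $R_M$. Regard the first $k$ fingerprint comparisons as $k$ i.i.d.\ Bernoulli trials, each succeeding with probability equal to the true resemblance, so that $X\sim B(k,R_M)$ as already noted in the problem setup. Take as null hypothesis $H_0\colon R_M\ge T$ and as alternative $H_1\colon R_M<T$, and use the left tail $\{X\le m\}$ as the rejection region. Because $R_M(k)=X/k$ and $T_L(k)=m_l/k$, the event $R_M(k)\le T_L(k)$ is exactly the event $X\le m_l$, so it remains to show that declaring $R_M<T$ on $\{X\le m_l\}$ errs with probability at most $e$; that is precisely the content of ``$R_M<T$ when $R_M(k)\le T_L(k)$'' read with confidence $1-e$.

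First I would fix the calibration of $m_l$. Writing $F_{k,p}(m)=\sum_{i=0}^{m}\binom{k}{i}p^{i}(1-p)^{k-i}$ for the $B(k,p)$ c.d.f., the defining equation of the lemma is $F_{k,T}(m_l)=e$. Since $F_{k,T}$ takes only finitely many values, I would state that $m_l$ is chosen as the largest integer in $(0,k]$ with $F_{k,T}(m_l)\le e$ (equivalently, solve the equation and round down), so that the attained level is $\le e$; this is the standard discreteness caveat and costs nothing.

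The heart of the argument is the monotonicity of the binomial left tail in its success probability: for fixed $k$ and $m$, the map $p\mapsto F_{k,p}(m)$ is non-increasing on $[0,1]$. I would prove this by the telescoping derivative identity
\[
\frac{d}{dp}\sum_{i=0}^{m}\binom{k}{i}p^{i}(1-p)^{k-i}=-k\binom{k-1}{m}p^{m}(1-p)^{k-1-m}\le 0,
\]
in which the intermediate terms cancel in pairs; alternatively one may invoke the obvious stochastic domination $B(k,p_1)\preceq_{\mathrm{st}}B(k,p_2)$ for $p_1\le p_2$, obtained by coupling each Bernoulli$(p_1)$ variable below the corresponding Bernoulli$(p_2)$ one, or the incomplete-Beta representation $F_{k,p}(m)=I_{1-p}(k-m,\,m+1)$. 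Either way, for every $R_M\ge T$,
\[
\Pr\bigl(X\le m_l \ \big|\ \text{true resemblance}=R_M\bigr)=F_{k,R_M}(m_l)\le F_{k,T}(m_l)\le e .
\]
Hence whenever we observe $R_M(k)=X/k\le m_l/k=T_L(k)$ and declare $R_M<T$, the probability of being wrong (i.e.\ of actually having $R_M\ge T$) is at most $e$, so terminating the comparisons for such a pair is safe up to the prescribed small probability $e$, which is the assertion of the lemma.

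I expect the only genuine obstacle to be the monotonicity step; everything else is bookkeeping, and even that step is classical and dispatched by the derivative identity above. Two minor points deserve a sentence each: the discreteness of $m_l$ noted above, and the observation that the argument uses nothing about Minwise Hashing beyond $X\sim B(k,\cdot)$, which is exactly why the same filter transfers to $b$-bit Minwise Hashing, One Permutation Hashing, and any other estimator with a binomial count.
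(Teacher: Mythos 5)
Your proposal is correct and follows the same route as the paper: a one-sided hypothesis test with $H_0\colon R_M\ge T$, the left tail $\{X\le m_l\}$ as rejection region, and the identification of the observable event $R_M(k)\le T_L(k)$ with $X\le m_l$. The one substantive difference is that you supply the step the paper leaves implicit. The paper calibrates $m_l$ by $\sum_{i=0}^{m_l}\binom{k}{i}T^{i}(1-T)^{k-i}=e$ but then asserts, without justification, that $\Pr(X\le m_l)=\sum_{i=0}^{m_l}\binom{k}{i}R_M^{i}(1-R_M)^{k-i}\le e$ for every $R_M$ consistent with $H_0$; passing from the calibration at $p=T$ to the bound uniformly over $p\ge T$ requires exactly the monotonicity of the binomial left tail in its success probability, which you prove via the telescoping derivative identity (or coupling). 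You also handle the discreteness of $m_l$, which the paper ignores. Both versions share the same (unavoidable) caveat, which you state more honestly than the lemma does: the conclusion $R_M<T$ is not "obvious" or deterministic but holds only up to a Type~I error of at most $e$.
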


We use the hypothesis testing to prove the Lemma 1.

\begin{proof}
Assume $H_0$:$R_M \geq T$, $H_1$:$R_M < T$, and the random variable X satisfies the binomial distribution: $X \sim B(n,R_M)$, at the k-th ($0<k\leq K$)comparison point. The probability of the event {X$\leq$ m} is:
\begin{equation*}
Pr(X\leq m)=\sum_{i=0}^{m}\tbinom{n}{i}{R_M}^{i}(1-R_M)^{n-i}\leq e
\end{equation*}
where m satisfies $0<m\leq k$ .

Obviously, the event {X$\leq$m} belongs to a small probability event. When $R_{M}(k) \leq T_{L}(k)$, we know that:
\begin{equation*}
\dfrac{X}{k}\leq\dfrac{m_l}{k}
\end{equation*}
Then,
\begin{equation*}
X\leq m_l \leq m
\end{equation*}
In other words, when $R_{M}(k) \leq T_{L}(k)$, the small probability event {X$\leq$m} occurs in an experiment. Therefore, we should reject the hypothesis $H_0$ and accept the hypothesis $H_1$. According to the above discussion, the estimated similarity $R_{M}$ is less than the threshold T and the lemma is proved.
\end{proof}

The following is an example of Lemma~\ref{lemma:LowThre}:

Given a set pair ($S_{2i-1}, S_{2i}$), when K=1000, T=0.5, k=100, we know the random variable X satisfies the binomial distribution, that is, $X \sim B(n,T)$. The distribution function F(x) of X for different m are shown in Table~\ref{tab:prob}.

\begin{table}
\centering
\caption{The values of F(x) for different m} \label{tab:prob}

\centering
\begin{tabular}{cc|cc}

\hline
m& F(x)& m &F(x)\\
\hline
10&	1.53-17&	60&	0.982\\
20&	5.6-10&	70&	0.999\\
30&	3.9-5&	80&	0.999\\
40&	0.028&	90&	0.999\\
50&	0.539&	100&	1\\
\hline
\end{tabular}

\end{table}

When x=20, it is obvious that Pr(X$\leq$20)$\leq$5.6-10 and the event {X$\leq$20} is a small probability event, according to the Table 1.

Assume $H_0$:$R_M \geq T=0.5$, $H_1$:$R_M < T=0.5$, We select a small probability e =5.6-10, m=$m_l$=20 and obtain the lower threshold $T_L$(100)=20/100=0.2. At the 100-th comparison, there is $R_{M}(k)=X\dfrac k\leq T_{L}(k)$=0.2. The probability of the event {X$\leq$20} is only 5.6-10. Clearly, the event {X$\leq$20} is a small probability event. However, it occurs in an experiment. According to the above discussion, we must reject the hypothesis $H_0$: $R_{M}\geq T$=0.5and accept the hypothesis $H_1: R_M<T$=0.5. It means the similarity of the pair ($S_{2i-1},S_{2i}$) is less than threshold T=0.5.

Fig.~\ref{fig:3} explains the original comparison process, the set pairs will be output when their similarities $R_M$ are higher than the threshold T. In our method, we can add the lower threshold $T_L$(100), at the 100-th comparison point, as what is shown in Fig.~\ref{fig:4}. If $R_M(k=100)≤T_L(100)$, we easily obtain the $R_M(k=1000)<T$ and there is no need to the remaining  comparisons. However, when $R_M(k=100)>T_L(100)$, we require the remaining 900 comparisons and calculate the $R_M$(k=1000).

\begin{figure}[thb]
\newskip\subfigtoppskip \subfigtopskip = -0.1cm
\centering
\includegraphics[width=.80\linewidth]{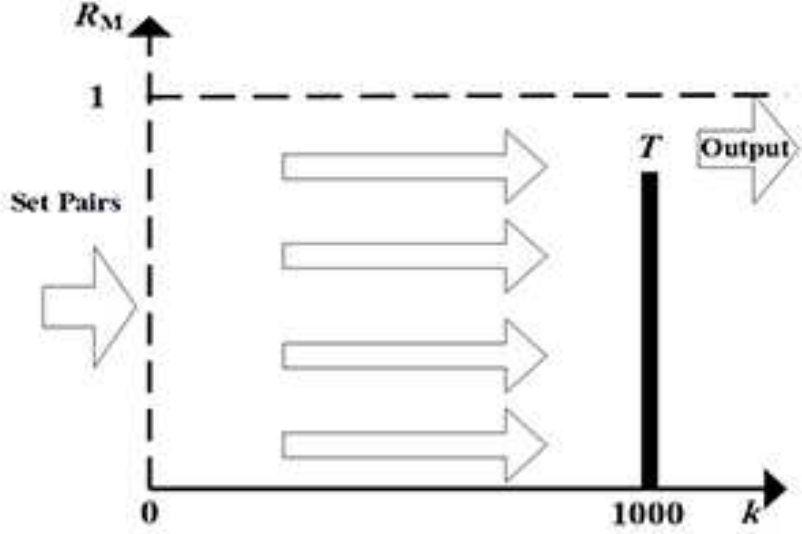}
\vspace{-1mm}
\caption{\small Original comparison process}
\label{fig:3}
\end{figure}

\begin{figure}[thb]
\newskip\subfigtoppskip \subfigtopskip = -0.1cm
\centering
\includegraphics[width=.80\linewidth]{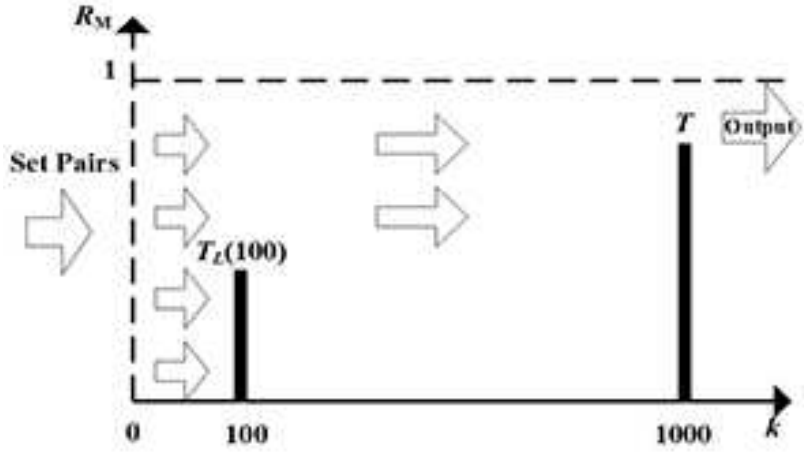}
\vspace{-1mm}
\caption{\small Our comparison process}
\label{fig:4}
\end{figure}

To sum up, during the Minwise Hashing comparison process, we set the small probability e, the threshold T as well as the estimated similarity $R_M$(k) at the k-th ($0<k\leq K$) observation point. If $R_M(k)<T_L(k)$, we can predict $R_M<T$. Therefore, there is no need to the rest of comparisons. Compared the Fig.~\ref{fig:3} with the Fig.~\ref{fig:4}, the method effectively saves the computing time.

\textbf{The Upper Threshold}
Meanwhile, there must be an upper threshold: $T_U$.

\begin{lemma} \label{lemma:UpThre}
Given a threshold T and a small probability e, according to the equation:
\begin{equation*}
\label{eqn:12}
\sum_{i=m+1}^{k}\tbinom{k}{i}{T}^{i}(1-{T})^{k-i}=e, 0<k\leq K
\end{equation*}
We could obtain m=$m_u$ and the upper threshold $T_U(k) = \dfrac{m_u}{k}$. It is clear that $R_M >T$, when $R_M(k)\geq T_U(k)$.
\end{lemma}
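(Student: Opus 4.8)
The plan is to transcribe the hypothesis-testing argument of Lemma~\ref{lemma:LowThre} with the two binomial tails interchanged. First I would set up the test with null hypothesis $H_0 : R_M \le T$ against the alternative $H_1 : R_M > T$, and recall that at the $k$-th ($0<k\leq K$) comparison point the number $X$ of equal fingerprints is binomially distributed, $X \sim B(k, R_M)$, exactly as in~(5). The equation defining the lemma is $\sum_{i=m+1}^{k}\tbinom{k}{i}T^{i}(1-T)^{k-i} = e$, which says precisely that $\Pr_{B(k,T)}(X > m) = e$; I would let $m = m_u$ be its solution and set $T_U(k) = m_u/k$.

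Next I would argue that, under $H_0$, the event $\{X > m_u\}$ is a small probability event. Since $H_0$ only gives $R_M \le T$ rather than equality, the key auxiliary fact is the monotonicity of the upper binomial tail in its success parameter: $\Pr_{B(k,p)}(X > c)$ is non-decreasing in $p$ (proved, if needed, by the standard coupling of $\mathrm{Bin}(k,p)$ with $\mathrm{Bin}(k,p')$ for $p \le p'$). Consequently, whenever $R_M \le T$ we have $\Pr(X > m_u) \le \Pr_{B(k,T)}(X > m_u) = e$, so $\{X > m_u\}$, equivalently $\{R_M(k) > m_u/k\}$, is a small probability event; this is the exact mirror of the statement in Lemma~\ref{lemma:LowThre} that $\{X \le m\}$ is a small probability event.

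Finally I would close the loop as in Lemma~\ref{lemma:LowThre}: if the observed estimate satisfies $R_M(k) \ge T_U(k)$, then $X/k \ge m_u/k$, hence $X \ge m_u$, so the small probability event has occurred in a single experiment; by the small-probability-event principle we reject $H_0$ and accept $H_1$, which is exactly $R_M > T$, proving the lemma. I would then append a numerical illustration analogous to the one following Lemma~\ref{lemma:LowThre}, reading $m_u$ off the rows of Table~\ref{tab:prob} with $\mathcal{F}(x)$ close to $1$ (e.g. $m_u=70$ gives $e\approx 10^{-3}$, so $T_U(100)=0.7$).

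The step I expect to be the main obstacle is making the composite-null reasoning airtight rather than any algebra: because the null is $R_M \le T$, the argument genuinely relies on the stochastic monotonicity of the binomial tail rather than on one fixed distribution, and one must watch the boundary — translating $R_M(k) \ge T_U(k)$ yields $X \ge m_u$, so the event whose probability one actually needs to control is $\{X \ge m_u\}$, whereas the defining equation sums from $m_u+1$; reconciling the two forces either a harmless off-by-one in the definition of $T_U(k)$ or the use of a slightly larger (but still small) tail bound. Everything else is a symmetric copy of the lower-threshold proof.
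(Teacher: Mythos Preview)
Your proposal is correct and follows essentially the same hypothesis-testing mirror of Lemma~\ref{lemma:LowThre} that the paper uses. If anything, you are more careful than the paper: you explicitly justify the composite-null step via stochastic monotonicity of the binomial tail and flag the $X\ge m_u$ versus $X>m_u$ boundary, both of which the paper glosses over (and your numerical example uses $m_u=70$ where the paper picks $m_u=80$, but that is incidental).
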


\begin{proof}
Assume $H_0$:$R_M < T$, $H_1$:$R_M \geq T$, and the random variable X satisfies the binomial distribution: $X \sim B(n,R_M)$, at the k-th ($0<k\leq K$)comparison. The probability of the event {X$>$ m} is:
\begin{equation*}
Pr(X > m)=\sum_{i=m+1}^{k}\tbinom{k}{i}{R_M}^{i}(1-R_M)^{k-i}\leq e
\end{equation*}
where m satisfies $0<m\leq k$ .

Obviously, the event {X>m} is a small probability event. When $R_{M}(k) > T_{U}(k)$, we know that:
\begin{equation*}
\dfrac{X}{k}>\dfrac{m_u}{k}
\end{equation*}
Then,
\begin{equation*}
X > m_u > m
\end{equation*}
That is to say, when $R_{M}(k) > T_{U}(k)$, the small probability event {X>m} occurs in an experiment, we should refuse the hypothesis $H_0$ and accept the hypothesis $H_1$. Therefore, the Lemma~\ref{lemma:UpThre} is proved.
\end{proof}

The following is an example of Lemma~\ref{lemma:UpThre}.

Given a set pair ($S_{2i-1},S_{2i}$), when K=1000, T=0.5 and k=100, and the random variable X, $X \sim B(n,T)$. The values of 1-F(x) for different m are shown in Table 3.

According to the Table 3, when x=80, it is obvious that $Pr(X>80)<1.35-10$ and the event ${X>80}$ satisfies a small probability event. We assume that: $H_0$:$R_M < T=0.5$, $H_1$:$R_M \geq T=0.5$.
Identically, we select e =1.35-10, m=$m_u$=80 and obtain the upper threshold $T_U$(100)=$\dfrac{80}{100}$=0.8. At the 100-th comparison, there is $R_M$(k)=$\dfrac{X}{k}$>$T_U$(k)=0.8. The probability of the event ${X>80}$ is only 1.35-10. Clearly, the event ${X>80}$ is a small probability event. However, it happens in an experiment. According to the above discussion, we have to reject the hypothesis $H_0:R_M<T$=0.5 and accept the hypothesis $H_1: R_M\geq T$=0.5. That is, the estimated similarity $R_M$ of the pair ($S_{2i-1}, S_{2i}$) satisfies $R_M\geq T$=0.5.

In short, during the Minwise Hashing comparison process, we can set the small probability e, the similarity threshold T as well as the estimated similarity $R_M$(k) at the k-th ($0<k\leq K$) comparison point. It exists the upper threshold $T_U(k)=\dfrac{mu}{k}$. If $R_M(k)>T_U(k)$, we can predict $R_M>T$. There is no need to the following K-k comparisons. As shown in Fig.~\ref{fig:5}, it effectively saves the computing time in image similarity measures. When k=100, we can add the upper threshold $T_U$(100). If $R_M(k=100)>T_U$(100), we can easily draw that $R_M(k=1000)>T$, and there is no need for the following 900 comparisons. If $R_M (k=100)≤T_U$(100), we require the remaining 900 comparisons and continue to calculate the $R_M$(k=1000).

\begin{figure}[thb]
\newskip\subfigtoppskip \subfigtopskip = -0.1cm
\centering
\includegraphics[width=.60\linewidth]{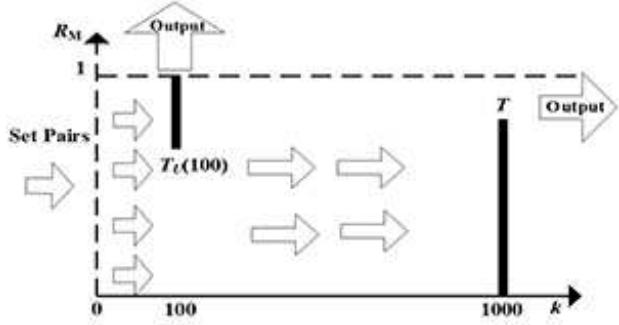}
\vspace{-1mm}
\caption{\small Upper Threshold Filter}
\label{fig:5}
\end{figure}

\subsubsection{The Strategy of the Filter}
Combining the above discussion, we can set the upper and the lower threshold simultaneously at the k-th($0<k\leq K$) comparison point. The threshold filter can eliminate or output the predictable set pairs ahead of time.

Obviously, we can build a series of comparison points {$k_1, k_2, \cdots, k_i, \cdots, k_{n-1}, k_n$}, ($0<k_i\leq K$, 0<i≤n) . That is to say, there are n points in the whole comparison process. Therefore, we can set the lower threshold value $T_L(k_i)$ and the upper threshold $T_U(k_i)$ at each comparison point $k_i$. The dynamic threshold filter of Minhash is shown in the Algorithm 1 and the Fig.~\ref{fig:6}.

\begin{algorithm}[h]
\begin{algorithmic}[1]
\footnotesize
\caption{\bf Dynamic Threshold Filter of Minhash}
\label{alg:dtf}

\INPUT  \\
All the set pairs {$(S_1, S_2),(S_3, S_4), \cdots,(S_{2n-1}, S_{2n})$}; A specified threshold T; A small probability e; A series of comparison points ${k_1, k_2, \cdots, k_i, \cdots, k_{n-1}, k_{n}}, (0<ki\leq K, 0<i\leq n)$
\\

\OUTPUT \\
The set pairs whose estimated similarities are greater than the threshold T:{$(S_{2i-1}, S_{2i})\mid R(S_{2i-1}, S_{2i}) \geq T, 1\leq i≤n $}\\
\FOR{for each set pair ($S_{2i-1}, S_{2i}$)}
    \FOR{k= 1: K}
    \STATE R=R(k);
    \STATE i=1;
    \IF{k=$k_i$}
    \STATE $T_L(k)= \dfrac{m_l}{k}; T_U(k)= \dfrac{m_u}{k}$;
        \IF{$R\geq T_U(k)$}
        \STATE Output the set pair ($S_{2i-1}, S_{2i}$); Break;
        \ENDIF
        \IF{$R\leq T_L(k)$}
        \STATE Filter out the set pair ($S_{2i-1}, S_{2i}$); Break;
        \ENDIF
        \STATE i++;
    \ENDIF
    \ENDFOR
\ENDFOR

\end{algorithmic}
\end{algorithm}

Algorithm 1. The inputs of the algorithm include all the set pairs {$(S_1, S_2), (S_3, S_4), \cdots, \\(S_{2n-1}, S_{2n})$}, as well as the parameters that we set ahead of time: a specified threshold T, a small probability e and a series of comparison points {$k_1, k_2, \cdots, k_i, \cdots, k_{n-1}, k_n$}, ($0<k_i\leq K, 0<i\leq n$). The outputs are the set pairs whose estimated similarities are greater than the threshold T after comparisons: {$(S_{2i-1}, S_{2i}) \mid R(S_{2i-1}, S_{2i}) \geq T, 1\leq i\leq n$ }. Line 3 shows that the algorithm calculates the estimated similarity R=R($k_i$) through the top k fingerprints. Line 5-15 describe that the algorithm calculates the lower threshold $T_L(k_i)$ and the upper threshold $T_U(k_i)$ through the equation (9), (12) at the comparison point k= $k_i$. If $R\geq T_U(k_i)$, we can judge that: $R\geq T$, and the set pair ($S_{2i-1}, S_{2i}$) can be output ahead of time. Besides, if R<$T_L(k_i)$, we could obtain: R<T, and the set pair ($S_{2i-1}, S_{2i}$) can be filtered out in advance. Otherwise, the algorithm should enter the next point $k_i$+1 and continue to compare the remaining fingerprints.

Fig.~\ref{fig:6}. gives an example of the entire comparison process, we can set $k_i$=100, 200, ... , and define the lower threshold value $T_L(k_i)$ and the upper threshold $T_U(k_i)$ for each comparison point $k_i$.

\begin{figure}[thb]
\newskip\subfigtoppskip \subfigtopskip = -0.1cm
\centering
\includegraphics[width=.60\linewidth]{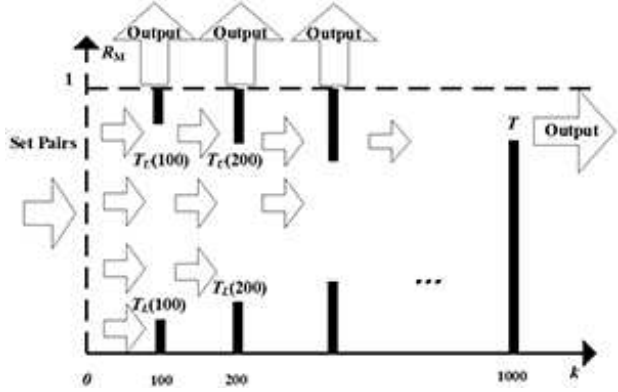}
\vspace{-1mm}
\caption{\small Lower Threshold Filter}
\label{fig:6}
\end{figure}

\subsection{Experiment and Analysis}
\label{exp}
Our empirical studies aim to evaluate the performance of the filter on image dataset “Caltech256”.

Caltech256: contains 30, 607 images of objects, which were obtained from Google image search and from PicSearch.com. All the images were assigned to 257 categories and evaluated by humans in order to ensure image quality and relevance.

We use the Bag-of-Visual-Words model and a 128-dimensional SIFT descriptor for image representations. Therefore, we can obtain a visual set from each image. In this experiment, we only use Minwise Hashing for sampling and generating the fingerprints (or hash values). Of course, we can also use b-Bit Minwise Hashing or One Permutation Hashing. The results and analysis of the experiment are as follows:

\subsubsection{Comparison Time}
(1) We compare the time of comparison between the original Minwise Hash and the dynamic threshold filter. As shown in Fig.~\ref{fig:7}, the Minwise Hashing with a filter greatly reduces the calculation time. When comparing 104 set pairs, we can easily find out that the time of the comparison is inversely proportional to the value of the small probability. The comparison time of the original Minwise Hashing is 30*103 ms. After using a filter with a small probability e=10-3, the calculation time is the least, only 9.3*103 ms, and it is 31\% of the original Minwise Hashing.

\begin{figure}[thb]
\newskip\subfigtoppskip \subfigtopskip = -0.1cm
\centering
\includegraphics[width=.60\linewidth]{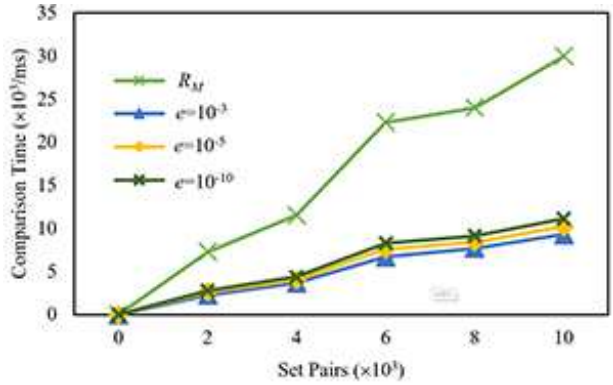}
\vspace{-1mm}
\caption{\small Vary Number of Set Pairs}
\label{fig:7}
\end{figure}

(2) We select three copies of 4000 set pairs and half of their similarities (measured by the Jaccard coefficient) are about 80\%, 50\%, and 30\%, respectively. As shown in Fig.~\ref{fig:8}, the filter is more effective for the set pairs whose similarities are very low or high. If the similarity of the set pairs is mostly high or low, the comparison time will be little.

\begin{figure}[thb]
\newskip\subfigtoppskip \subfigtopskip = -0.1cm
\centering
\includegraphics[width=.60\linewidth]{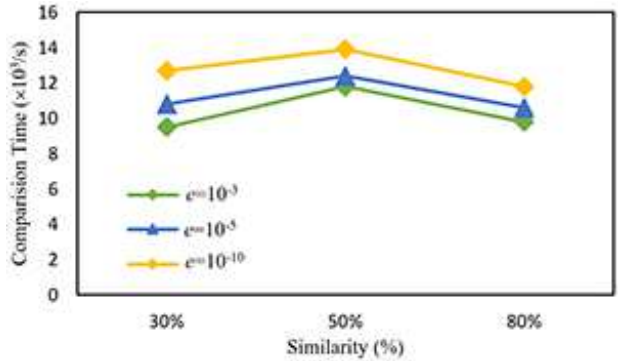}
\vspace{-1mm}
\caption{\small Vary Similarity Threshold}
\label{fig:8}
\end{figure}

\subsubsection{The Filtering Rate}
The Filtering rate refers to the probability that the set pairs are excluded or output in advance, at the k-th comparison point. Therefore, we define the filtering rate (FR) at comparison point k as:
\begin{equation*}\label{eqn:26}
FR(T,k,P_r)=\dfrac{R_M(k)<T_L(k)}{Num}
\end{equation*}

, where the variable Num represents the total number of set pairs and Num=3×105.

\begin{figure}[thb]
\newskip\subfigtoppskip \subfigtopskip = -0.1cm
\centering
\includegraphics[width=.60\linewidth]{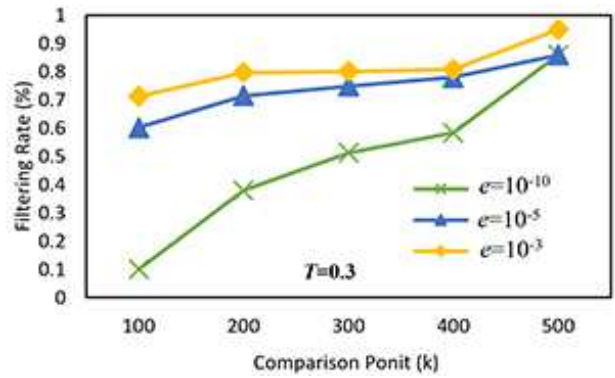}
\vspace{-1mm}
\caption{\small Vary Comparison Point, When T=0.3}
\label{fig:9}
\end{figure}

\begin{figure}[thb]
\newskip\subfigtoppskip \subfigtopskip = -0.1cm
\centering
\includegraphics[width=.60\linewidth]{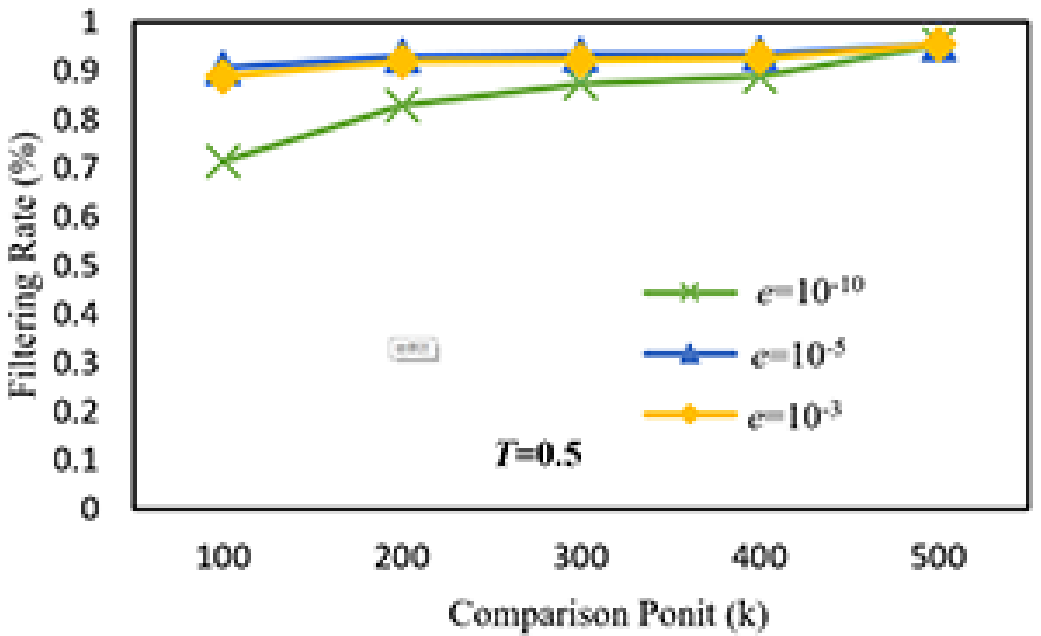}
\vspace{-1mm}
\caption{\small Vary Comparison Point, When T=0.5}
\label{fig:10}
\end{figure}

\begin{figure}[thb]
\newskip\subfigtoppskip \subfigtopskip = -0.1cm
\centering
\includegraphics[width=.60\linewidth]{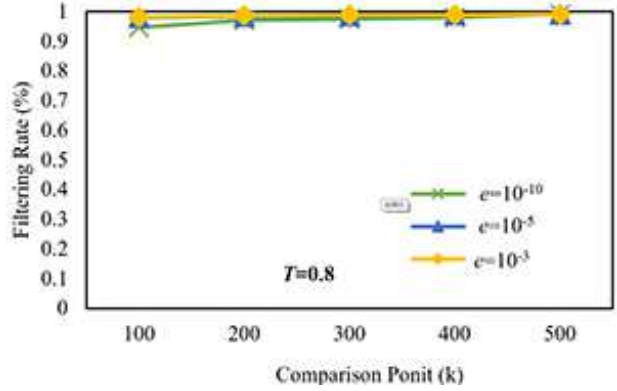}
\vspace{-1mm}
\caption{\small Vary Comparison Point, When T=0.8}
\label{fig:11}
\end{figure}

Obviously, the filtering rate has a great relationship with the input data. When the similarities of the set pairs are mostly low, the filtering rate will be high. According to the above equation, we mainly measure the filtering rate (FR) at different small probability e=10-10, 10-5, 10-3, as shown in Fig. ~\ref{fig:9}, Fig.~\ref{fig:10}, Fig.~\ref{fig:11}. As the small probability increases, the filtering rate is also increasing. It means that the more set pairs are excluded or output in advance, and the less the calculation time is. For example, FR(0.3,100,10-10)=10\%, FR(0.3,100,10-5)=60\%, FR(0.3,100,10-3)=72\% when k = 100, T=0.3. Among them, FR(0.3, 100, 10-3) =72\% means that 73\% of the set pairs save the remaining 900 comparisons.

\subsubsection{The Accuracy of the Filter}
We select three groups of data with actual similarity about 80\%, 50\% and 30\%. Each group include $4*10^{3}$ pairs of sets. In Fig. ~\ref{fig:12}, we mainly analyze the accuracy of the filter. We found that the accuracy of the filter is extremely close to 1.0 and the error can be negligible. That is to say, the image similarity estimated by the filter is almost the same as the original minhash. The main reason may be that the small probability events seldom happen in similarity measurement experiments. In addition, we found that the smaller the probability e is, the higher the accuracy is.

\begin{figure}[thb]
\newskip\subfigtoppskip \subfigtopskip = -0.1cm
\centering
\includegraphics[width=.60\linewidth]{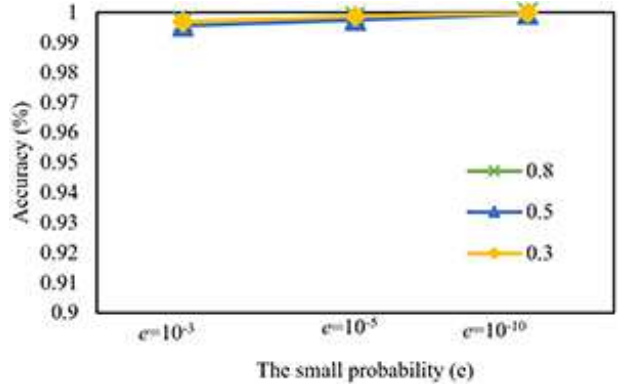}
\vspace{-1mm}
\caption{\small Vary Small Probability}
\label{fig:12}
\end{figure}

\section{Conclusion}
\label{con}
In this paper, we use the Bag-of-Words model and a 128-dimensional SIFT descriptor for image feature representation. The method has achieved great success in computer vision applications, such as image matching, near-duplicate detection and image search. Inspired by the successes of Minwise Hashing in computer vision, we combine binomial distribution with small probability event and propose a dynamic threshold filter for large-scale image similarity measures. It greatly reduces the calculation time by terminating the unnecessary comparison in advance. Besides, we find that the filter can be extended to other hashing algorithms for image similarity measures, such as b-Bit Minwise Hashing, One Permutation Hashing, etc. Our experimental results are based on the image database “Caltech256”, which proves that the filter is effective and correct.

\textbf{Acknowledgments:} This work was supported in part by the National Natural Science Foundation of China
(61379110, 61472450, 61702560), the Key Research Program of Hunan Province(2016JC2018), project (2016JC2011, 2018JJ3691) of Science and Technology Plan of Hunan Province, and Fundamental Research Funds for Central Universities of Central South University (2018zzts588).



\bibliographystyle{spmpsci}      

\bibliography{ref}

\end{document}